\newtheorem{theorem}{Theorem}
\newtheorem{lemma}{Lemma}
\newtheorem{corollary}{Corollary}
\title{The Mueller matrix cone and its application to filtering}
\author{Tim Zander\thanks{ Karlsruhe Institute of Technology,
    Vision and Fusion Laboratory (IES),
    Karlsruhe}
  \\ tim.zander@kit.edu  
  \and
  Jürgen Beyerer\footnotemark[1]
  \thanks{Fraunhofer Institute of Optronics, System Technologies and Image Exploitation (IOSB), Karlsruhe}
  \\  juergen.beyerer@iosb.fraunhofer.de
}
\begin{document}
\newpage
\maketitle
\begin{abstract}
  We show that there is an isometry between the real ambient space of all Mueller matrices
  and the space of all Hermitian matrices which maps
  the Mueller matrices onto the positive semidefinite matrices.
  We use this to establish an optimality result for the filtering of Mueller matrices,
  which roughly says that it is always enough to filter the eigenvalues of the corresponding ``coherency matrix''.
  Then we further explain
  how the knowledge of the cone of Hermitian positive semidefinite matrices
  can be transferred to the cone of Mueller matrices with a special emphasis towards optimisation.
  In particular, we suggest that means of Mueller matrices should be
  computed within the corresponding Riemannian geometry.
\end{abstract}

\section{Introduction}

In polarisation optics Mueller matrices are of great importance,
as they describe the change of polarisation of light after interacting with a medium in a linear fashion.
In order to be a Mueller matrix the matrix has to  satisfy the Stokes criterion,
which states that every Stokes vector has to be mapped onto a Stokes vector.
Cloude then showed in \cite{10.1117/12.962889} that Mueller matrices can be associated
with Hermitian matrices with non-negative eigenvalues the so called coherency or covariance matrices.
This was then used for filtering measured matrices in order to make them physically meaningful,
i.e. satisfying the Stokes criterion.
Moreover, it was shown that any coherency matrix of a non-depolarising Mueller matrix
also known as a Stokes-Mueller matrix has only one non-zero eigenvalue.
This then easily suggests that any Mueller matrix is the sum of four non-depolarising matrices.
In \cite{kostinski1992depolarization} and \cite{le1996optical} matrices, which can be decomposed into a non-depolarising part
and one perfectly depolarising part, have been analysed and in the latter a filtering method was proposed.
Optimality of filtering was analysed in \cite{aiello2006maximum} by using a maximum likelihood method originally developed for quantum
process tomography and does such as \cite{faisan2013estimation} rely on
the Cholesky decomposition of the coherency matrix for filtering.
More about optimality filtering of Mueller matrices was derived in
\cite{boulvert2009decomposition}, \cite{Anna:11}, \cite{goudail2011polarimetric} and \cite{faisan2013estimation}. %
In \cite{Gil:16} then the optimality of the Cloude filter was rigorously proved.

The purpose of this work is to connect the methodologies of filtering of measured Mueller matrices
to well-established mathematical theories. We will show how this can be used to
prove a more general theorem about the optimality of filtering of Mueller matrices.
This simplifies and generalises part of the results of  \cite{Gil:16}.%
Moreover, we then review the mathematical theory about the Hermitian positive semidefinite cone
and explain, along with reviewing existing results, how this gives rise to
the differential geometry of the manifold of all Mueller matrices.

\section{Isometry of the ambient space}
In this section we explain how a well know result about the connection of Mueller matrices
and Hermitian positive definite matrices establishes an isometry between them.
For that we first restate the theorem which establishes this connection.
It implicitly first appeared in \cite{10.1117/12.962889}.
\begin{theorem}\label{sec:definition-3}(Theorem A.1 of \cite{van1993eigenvalue})
  Every Matrix $M\in \mathbb{R}^{4\times 4} $ with $M= (m_{ij})$ is a Mueller matrix
  if and only if the Hermitian matrix $H= (h_{ij})$ defined by the following linear equations has non-negative eigenvalues.
  Moreover, if the Mueller matrix has only one non-zero eigenvalue, then it is non-depolarising.
  \vspace{-.8cm}
  \small
  \begin{multicols}{2}
    \begin{equation}\label{eq:1}
      \begin{aligned}
        h_{00} =\frac{1}{2} (m_{00} +m_{11} + m_{22} +m_{33}),\\ 
        h_{11} =\frac{1}{2} (m_{00} +m_{11} - m_{22} -m_{33}),\\
        h_{22} =\frac{1}{2} (m_{00} -m_{11} + m_{22} -m_{33}),\\
        h_{33} =\frac{1}{2} (m_{00} -m_{11} - m_{22} +m_{33})
      \end{aligned}
    \end{equation}

    \begin{equation}\label{eq:2}
      \begin{aligned}
        h_{03} =\frac{1}{2} (m_{03} +m_{30} - Im_{12} +Im_{21}),\\
        h_{30} =\frac{1}{2} (m_{03} +m_{30} + Im_{12} -Im_{21}),\\
        h_{12} =\frac{1}{2} (m_{03} -Im_{30} + m_{12} +m_{21}),\\
        h_{21} =\frac{1}{2} (-m_{03} +Im_{30} + m_{12} +m_{21})
      \end{aligned}
    \end{equation}
  \end{multicols}

  \begin{multicols}{2}
    \begin{equation}\label{eq:3}
      \begin{aligned}
        h_{01} =\frac{1}{2} (m_{01} +m_{10} - Im_{23} +Im_{32}),\\
        h_{10} =\frac{1}{2} (m_{01} +m_{10} + Im_{23} -Im_{32}),\\
        h_{23} =\frac{1}{2} (m_{01} -Im_{10} + m_{23} +m_{32}),\\
        h_{32} =\frac{1}{2} (-m_{01} +Im_{10} + m_{23} +m_{32})
      \end{aligned}
    \end{equation}

    \begin{equation}\label{eq:4}
      \begin{aligned}
        h_{02} =\frac{1}{2} (m_{02} +m_{20} - Im_{13} +Im_{31}), \\
        h_{20} =\frac{1}{2} (m_{02} +m_{20} + Im_{13} -Im_{31}),\\
        h_{13} =\frac{1}{2} (m_{02} -Im_{20} + m_{13} +m_{31}),\\
        h_{31} =\frac{1}{2} (-m_{02} +Im_{20} + m_{13} +m_{31})
      \end{aligned}
    \end{equation}
  \end{multicols}
\end{theorem}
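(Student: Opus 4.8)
The plan is to recognise the prescription $M\mapsto H$ of equations (1)--(4) as a concrete incarnation of the Choi--Jamio{\l}kowski isomorphism and to reduce the whole statement to Choi's theorem on completely positive maps. First I would fix the standard identification of a Stokes vector $s=(s_0,s_1,s_2,s_3)^{\mathsf T}$ with the $2\times2$ Hermitian matrix $\rho(s)=\tfrac12\sum_{i=0}^{3}s_i\sigma_i$, where $\sigma_0=\mathrm{Id}$ and $\sigma_1,\sigma_2,\sigma_3$ are the Pauli matrices. The eigenvalues of $\rho(s)$ are $\tfrac12\bigl(s_0\pm\sqrt{s_1^2+s_2^2+s_3^2}\,\bigr)$, so the admissible Stokes vectors correspond exactly to the positive semidefinite $\rho$, and a real $M$ acting on Stokes vectors corresponds to the Hermiticity-preserving real-linear map $\Phi_M$ determined by $\rho(s)\mapsto\rho(Ms)$.

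The second step is purely computational: I would check that the matrix $H$ of (1)--(4) equals, up to a fixed unitary conjugation coming only from the change of basis between the Pauli matrices and the matrix units $E_{kl}$, the Choi matrix $C(\Phi_M)=\sum_{k,l}E_{kl}\otimes\Phi_M(E_{kl})$. In particular $H$ and $C(\Phi_M)$ then share their spectrum, so $H\ge0$ if and only if $C(\Phi_M)\ge0$. Since $M\mapsto H$ and $M\mapsto C(\Phi_M)$ are both real-linear, it suffices to verify the identity on a spanning set, or equivalently to expand each $\Phi_M(\sigma_j)$ back in the Pauli basis and match coefficients against (1)--(4); this is routine but error-prone, the only real hazards being the factors of $i$ attached to the $\sigma_2$ (imaginary) components and the index bookkeeping.

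With the correspondence established, the heart of the theorem is Choi's theorem: $C(\Phi_M)\ge0$ if and only if $\Phi_M$ is completely positive, if and only if $\Phi_M$ admits an operator-sum form $\Phi_M(\rho)=\sum_k J_k\rho J_k^{\dagger}$ with complex $2\times2$ matrices $J_k$. Read back through $\rho(\cdot)$, each summand $\rho\mapsto J_k\rho J_k^{\dagger}$ is the action of a Jones matrix, i.e.\ a non-depolarising (Stokes--Mueller) component, so $H\ge0$ is equivalent to $M$ being a sum of non-depolarising Mueller matrices, which is exactly the physical content of being a Mueller matrix. Moreover the rank of $C(\Phi_M)$ equals the minimal number of Kraus operators, so a single non-zero eigenvalue of $H$ (the coherency matrix, which is what is meant here) forces $C(\Phi_M)=\mathbf{j}\mathbf{j}^{\dagger}$ to be rank one, hence a single Jones matrix $J$ with $\Phi_M(\rho)=J\rho J^{\dagger}$, giving the \emph{moreover} clause that $M$ is then non-depolarising.

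The subtle point, which I would treat most carefully, is the precise meaning of ``Mueller matrix''. The naive Stokes criterion asks only that $\Phi_M$ be \emph{positive}, i.e.\ that it map the Stokes cone into itself, and this is strictly weaker than complete positivity: for example $\operatorname{diag}(1,1,-1,1)$, which corresponds to the transpose $\rho\mapsto\rho^{\mathsf T}$, preserves the Stokes cone yet yields an indefinite $H$. The equivalence of the theorem therefore holds only once ``Mueller matrix'' is read in its physically realisable sense---the medium may act on light correlated with its surroundings, which forces complete positivity---so that the operative requirement is the operator-sum / ensemble-of-Jones-matrices form rather than mere positivity. Justifying that this is the correct physical requirement, and thereby licensing the reduction to Choi's theorem, is where the real work lies; the remainder is the linear-algebraic verification of step two.
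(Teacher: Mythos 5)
Your proposal is correct in substance, but it necessarily diverges from the paper, because the paper contains no proof of this theorem at all: the statement is imported (up to a factor of $2$ in the normalisation, as the authors note) from Theorem A.1 of \cite{van1993eigenvalue}, whose original argument is an elementary linear-algebraic analysis in the $2\times 2$ representation, in effect constructing the decomposition into pure (Jones-derived) Mueller matrices by hand. Your route through the Choi--Jamio{\l}kowski isomorphism is the modern packaging of the same idea: once you check that $H$, up to the fixed unitary basis change between the Pauli matrices and the matrix units, is the Choi matrix of $\Phi_M$, Choi's theorem gives $H\succeq 0$ iff $\Phi_M$ is completely positive iff $\Phi_M(\rho)=\sum_k J_k\rho J_k^{\dagger}$ with Jones matrices $J_k$, and the rank bookkeeping (rank of $H$ equals the minimal number of Kraus operators) delivers both the \emph{moreover} clause and, for free, the fact quoted in the paper's introduction that every Mueller matrix is a sum of four non-depolarising ones. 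What your approach buys is conceptual economy and this rank statement; what van der Mee's computation buys is independence from the quantum-information machinery. The one step you defer as ``routine''---matching Equations \eqref{eq:1}--\eqref{eq:4} against the Choi matrix entries---is the only place your sketch could fail, since all the factor-of-$i$ and indexing conventions live there, but it does go through with the paper's normalisation. Note also that you silently repaired the statement's garbled \emph{moreover} clause (``if the Mueller matrix has only one non-zero eigenvalue'' must refer to $H$, not $M$), which is the correct reading.

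Your closing caveat is not a pedantic aside but load-bearing, and it exposes a real tension in the paper itself rather than a gap in your argument. The paper's introduction defines ``Mueller matrix'' by the Stokes criterion, i.e.\ mere positivity of $\Phi_M$ on the Stokes cone, and under that definition the stated equivalence is false: your example $M=\operatorname{diag}(1,1,-1,1)$, corresponding to $\rho\mapsto\rho^{\mathsf T}$, satisfies the Stokes criterion yet produces $H=\operatorname{diag}(1,1,-1,1)$, which is indefinite---one checks this instantly from \eqref{eq:1}, all off-diagonal $h_{ij}$ vanishing. The theorem as proved (by you, and by van der Mee) characterises the strictly smaller class of matrices expressible as sums of pure Mueller matrices, which is also the reading the rest of the paper actually needs, since Corollary \ref{sec:definition-4} asserts that $T$ maps the Mueller matrices \emph{onto} the positive semidefinite matrices. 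So your proof is sound for the theorem as it must be interpreted; the ``real work'' you identify---arguing that complete positivity, not bare Stokes-cone positivity, is the physically correct notion of a Mueller matrix---is a genuine physical-definitional issue that the paper leaves implicit and that no purely mathematical argument can discharge.
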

\normalsize
Note that we altered the result of these linear equations by a factor of $2$ in order to simplify oncoming observations,
but this does of course not change the validity of the theorem.

What to our knowledge has not yet discussed explicitly about the above result and the above equations is the following simple observation.
The whole trick is to realise that the Mueller matrices and Hermitian matrices are vectors
and then conclude that the Frobenius inner product coincide with the Hermitian/Euclidean inner product.

\begin{lemma}\label{sec:definition}
  Let $T$ be a linear map, which is defined by  the Equations \ref{eq:1}, \ref{eq:2}, \ref{eq:3} and \ref{eq:4}.
  The linear automorphism $T$ of the Hilbert space $\mathbb{C}^{4\times 4} $ with the usual Hermitian inner product
  is unitary.
  Moreover, the eigenvalues are $\{1, -1 \}$ with multiplicity $12$ and $4$ respectively and has therefore determinate $1$.
\end{lemma}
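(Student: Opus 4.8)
The plan is to exploit a block structure hidden in Equations \ref{eq:1}--\ref{eq:4}. Viewing $\mathbb{C}^{4\times 4}$ as a $16$-dimensional Hilbert space in which the matrix units $E_{ij}$ form an orthonormal basis for the Frobenius inner product $\langle A,B\rangle=\operatorname{tr}(A^{*}B)$, one notices that the sixteen coordinates split into four quadruples, each mapped into itself by $T$: namely $\{00,11,22,33\}$ (Equation \ref{eq:1}), $\{03,30,12,21\}$ (Equation \ref{eq:2}), $\{01,10,23,32\}$ (Equation \ref{eq:3}) and $\{02,20,13,31\}$ (Equation \ref{eq:4}). Hence, in the basis ordered by these quadruples, $T=\operatorname{diag}(B_{1},B_{2},B_{3},B_{4})$ is block diagonal with four explicit $4\times 4$ blocks. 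This reduces a single $16\times 16$ verification to four independent $4\times 4$ ones, and, since $T$ is unitary precisely when its columns are orthonormal, it suffices to check that the columns of each $B_{k}$ form an orthonormal set.

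First I would dispose of the first block, which is the cleanest. From Equation \ref{eq:1},
\[
B_{1}=\frac{1}{2}\begin{pmatrix}1&1&1&1\\1&1&-1&-1\\1&-1&1&-1\\1&-1&-1&1\end{pmatrix},
\]
the normalised $4\times 4$ Hadamard (Walsh) matrix. It is real and symmetric with $B_{1}^{2}=\operatorname{Id}$, hence orthogonal; its trace is $2$, so with eigenvalues confined to $\{\pm 1\}$ it carries $+1$ with multiplicity $3$ and $-1$ with multiplicity $1$. The remaining blocks $B_{2},B_{3},B_{4}$ are identical to one another up to a relabelling of indices, so only one genuinely new computation is required: a direct check that $B_{2}^{*}B_{2}=\operatorname{Id}$ (unitarity) and $B_{2}^{2}=\operatorname{Id}$ (the involution property). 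Each of these blocks again has all diagonal entries equal to $\tfrac12$, hence trace $2$, so the same trace argument forces eigenvalues $+1$ (multiplicity $3$) and $-1$ (multiplicity $1$) on each.

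Assembling the blocks finishes the proof: $T$ is block diagonal with unitary blocks, hence unitary; $T^{2}=\operatorname{diag}(B_{1}^{2},\dots,B_{4}^{2})=\operatorname{Id}$, so the spectrum lies in $\{1,-1\}$; and summing the per-block multiplicities gives $+1$ with total multiplicity $4\cdot 3=12$ and $-1$ with total multiplicity $4\cdot 1=4$. The determinant is therefore $1^{12}\,(-1)^{4}=1$, as claimed.

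The step I expect to be the genuine obstacle is the explicit verification that the three complex blocks are unitary and square to the identity: their entries involve $\pm I$, so the orthogonality relations between columns require careful bookkeeping of the complex conjugation, and this is exactly where sign slips occur. A less error-prone route, which also explains \emph{why} the normalisation by $\tfrac12$ is the correct one, is to recognise the columns of $T$ as the coordinates of a rescaled Pauli-type tensor basis $\{\tfrac12(\sigma_{i}\otimes\sigma_{j})\}_{i,j=0}^{3}$ of $\mathbb{C}^{4\times 4}$. Since this family is orthonormal for the Frobenius inner product, and the matrix units $E_{ij}$ are as well, $T$ is a change of basis between two orthonormal bases and is unitary for free; the identities $B_{k}^{2}=\operatorname{Id}$ then pin the spectrum to $\{\pm1\}$, and the trace bookkeeping above fixes the multiplicities $12$ and $4$ with no further spectral computation.
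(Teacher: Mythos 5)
Your proof is correct, and it takes a genuinely different route from the paper's. The paper's own proof is brute force: write down the full $16\times 16$ matrix of $T$, verify $T^{\dagger}T=TT^{\dagger}=1$, and compute the spectrum ``with your favourite solver''. You instead observe that the four quadruples of indices in Equations \ref{eq:1}--\ref{eq:4} are each invariant under $T$, so $T$ is block diagonal with four $4\times 4$ blocks; the first block is the normalised Hadamard matrix, the other three coincide up to relabelling of indices, and each block is a unitary involution of trace $2$, which forces eigenvalues $+1$ (multiplicity $3$) and $-1$ (multiplicity $1$) per block. This buys a hand-checkable argument that \emph{explains} the multiplicities $12$ and $4$ rather than merely reporting them, and your closing remark --- that $T$ carries the orthonormal basis of matrix units to the rescaled Pauli tensor basis $\tfrac12(\sigma_{i}\otimes\sigma_{j})$ (up to the conjugation convention on the second factor), hence is unitary as a change between two orthonormal bases --- supplies the conceptual reason that the paper's numerical verification leaves implicit.

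One caveat, at exactly the step you flagged as delicate. Taken literally, the third and fourth lines of Equations \ref{eq:2}, \ref{eq:3} and \ref{eq:4} contain typos: for instance $h_{12}$ should read $\tfrac12(Im_{03}-Im_{30}+m_{12}+m_{21})$ and $h_{21}$ should read $\tfrac12(-Im_{03}+Im_{30}+m_{12}+m_{21})$, since as printed one gets $h_{21}\neq\overline{h_{12}}$ whenever $m_{03}\neq 0$, so $H$ is not even Hermitian for real input. Correspondingly, in your block $B_{2}$ the row belonging to $h_{03}$, namely $\tfrac12(1,1,-I,I)$, is not orthogonal to the printed row belonging to $h_{12}$, namely $\tfrac12(1,-I,1,1)$ (their Hermitian inner product is $\tfrac14(1+I)$), so the literal blocks fail your direct check $B_{2}^{*}B_{2}=\mathrm{Id}$. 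Run on the corrected coefficients of van der Mee's theorem, your check does go through: the corrected $B_{2}=\tfrac12\bigl(\begin{smallmatrix}1&1&-I&I\\1&1&I&-I\\I&-I&1&1\\-I&I&1&1\end{smallmatrix}\bigr)$ is Hermitian with $B_{2}^{2}=\mathrm{Id}$ and trace $2$, as your argument requires. This defect afflicts the paper's solver-based proof equally; a structured computation like yours is precisely the kind that catches it.
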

\begin{proof}
  We may assume that we are working in $\mathbb{C}^{16} $ by taking
  the canonical bijection from $\mathbb{C}^{4\times 4} $ to $\mathbb{C}^{16} $.
  It will be enough now to write down the complex $16\times 16$-matrix $T$ corresponding to the equations \ref{eq:1}, \ref{eq:2}, \ref{eq:3} and \ref{eq:4}.
  Compute the eigenvalues of $T$ with your favourite solver 
  and then
  conclude that $T^{\dagger}T=TT^{\dagger}=1$ where $T^{\dagger} $ is the conjugate transpose follows.
\end{proof}
\vspace{0.1cm}

The nice thing about unitary operators is that they preserve the Hermitian inner product, i.e.
we have that $ \langle x, y\rangle = \langle T(x),T(y) \rangle $ for any $x,y \in \mathbb{C}^{4\times 4} $.
Hence, the Hermitian norm (which coincides with the euclidean norm, in case there are only real entries) is preserved under these map.

Moreover, by Theorem \ref{sec:definition-3} we know that $T$ maps the
set of all Mueller matrices to the set of all positive semidefinite
Hermitian matrices.\footnote{Moreover, it might be of interest to some, 
  that we can define a Lie group structure on $4\times 4$-Hermitian positive semidefinite matrices corresponding 
  to the non-singular Mueller matrices by defining $A\cdot B= T(T^{-1}(A)T^{-1}(B)) $.} 
We further investigate some properties of the map
$T$.  We denote as $\mathcal C$ the $\mathbb{R} $-vector space of all
Hermitian $4\times 4$-matrices and denote as $\mathcal R$ the $\mathbb{R} $-vector space of
all $4\times 4$-$\mathbb{R} $-matrices both with the usual trace scalar product.

\begin{lemma}\label{sec:definition-1}
  The restriction $T\restriction \mathcal R$ of the linear map $T$ (as defined in Lemma \ref{sec:definition})
  is some non-singular orthogonal linear transformation 
  from $\mathcal R$ to $\mathcal{C} $. 
\end{lemma}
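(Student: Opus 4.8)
The plan is to leverage the unitarity of $T$ on all of $\mathbb{C}^{4\times 4}$ established in Lemma \ref{sec:definition}, and to split the claim into three checks: that $T$ carries $\mathcal{R}$ into $\mathcal{C}$, that this restriction is a bijection onto $\mathcal{C}$, and that it preserves the trace scalar products. First I would inspect the defining Equations \ref{eq:1}, \ref{eq:2}, \ref{eq:3} and \ref{eq:4} under the hypothesis that every $m_{ij}$ is real. The diagonal entries $h_{00},h_{11},h_{22},h_{33}$ are then manifestly real, and each off-diagonal pair $(h_{ij},h_{ji})$ is, by construction, complex conjugate to one another; for instance $\overline{h_{03}}=\tfrac12(m_{03}+m_{30}+\mathrm{I}m_{12}-\mathrm{I}m_{21})=h_{30}$ once the $m$'s are real. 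Running through all four blocks shows $h_{ij}=\overline{h_{ji}}$ throughout, so that $T(\mathcal{R})\subseteq\mathcal{C}$.

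Next I would upgrade this inclusion to an equality. Since $T$ is unitary it is in particular injective, so $T\restriction\mathcal{R}$ is injective and hence $\dim_{\mathbb{R}}T(\mathcal{R})=\dim_{\mathbb{R}}\mathcal{R}=16$. As $\mathcal{C}$ also has real dimension $16$ and contains $T(\mathcal{R})$, the image must be all of $\mathcal{C}$. This simultaneously yields non-singularity: an injective linear map between two real vector spaces of the same finite dimension is automatically invertible.

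For orthogonality the key observation is that on each of the two subspaces the trace scalar product is nothing but the restriction of the ambient Hermitian form $\langle X,Y\rangle=\operatorname{tr}(X^{\dagger}Y)$. For real $A,B\in\mathcal{R}$ one has $A^{\dagger}=A^{\top}$, so $\langle A,B\rangle=\operatorname{tr}(A^{\top}B)$ is the usual real trace product; for Hermitian $H,K\in\mathcal{C}$ one has $H^{\dagger}=H$, so $\langle H,K\rangle=\operatorname{tr}(HK)$, and a short computation using $\overline{H}=H^{\top}$ shows this quantity is real. Both scalar products are therefore real-valued restrictions of the same form $\langle\,\cdot\,,\,\cdot\,\rangle$, and unitarity of $T$ gives
\[
  \langle T(A),T(B)\rangle=\langle A,B\rangle
\]
for all $A,B\in\mathcal{R}$, i.e. $T\restriction\mathcal{R}$ preserves the trace scalar product and is orthogonal.

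The main obstacle I anticipate is purely bookkeeping: verifying the conjugate-symmetry relations $h_{ij}=\overline{h_{ji}}$ across all four blocks of the defining equations, which is the only place where the specific structure of $T$ genuinely enters. Once that inclusion is secured, the dimension count and the identification of the two trace products with the restricted Hermitian form are immediate consequences of Lemma \ref{sec:definition}, and nothing beyond unitarity is needed.
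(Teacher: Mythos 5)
Your proposal is correct, and it takes a somewhat different route from the paper's own (very terse) proof. The paper realifies: it regards $T$ as a map from $\mathcal R$ into the $4\times 8$ real matrices by splitting $\mathbb{C}=\mathbb{R}^{2}$, views $\mathcal C$ as a $16$-dimensional real subspace of $\mathbb{R}^{4\times 8}$, and then asserts both orthogonality and non-singularity from Lemma \ref{sec:definition}. You instead stay inside $\mathbb{C}^{4\times 4}$ and argue in three explicit steps: the inclusion $T(\mathcal R)\subseteq\mathcal C$ by inspecting the defining equations, surjectivity onto $\mathcal C$ via injectivity plus the dimension count $\dim_{\mathbb{R}}\mathcal R=\dim_{\mathbb{R}}\mathcal C=16$, and orthogonality because both trace scalar products are real-valued restrictions of the ambient Hermitian form, which $T$ preserves. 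Your version is more complete on two points: the paper never makes the surjectivity onto $\mathcal C$ explicit, and you correctly isolate \emph{unitarity} --- rather than the eigenvalues $\pm 1$ as such, which the paper's wording appears to invoke --- as the property that yields preservation of the inner product.

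One caveat on your first step: run literally on the equations as printed, the conjugate-symmetry check fails for the pairs $(h_{12},h_{21})$, $(h_{23},h_{32})$ and $(h_{13},h_{31})$. For real $m_{ij}$ one gets $\overline{h_{12}}=\frac{1}{2}(m_{03}+Im_{30}+m_{12}+m_{21})$, whereas $h_{21}=\frac{1}{2}(-m_{03}+Im_{30}+m_{12}+m_{21})$; these agree only when $m_{03}=0$, and analogously with $m_{01}$ and $m_{02}$ in the other two blocks. This is evidently a typographical slip in Equations \ref{eq:2}, \ref{eq:3} and \ref{eq:4} (the terms $\pm m_{03}$, $\pm m_{01}$, $\pm m_{02}$ in those six lines should carry a factor $I$, as in Theorem A.1 of the cited source); with that correction your blockwise verification goes through, so the flaw lies in the printed equations and not in your logic. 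Alternatively you can bypass the computation entirely: Theorem \ref{sec:definition-3} already asserts that the matrix $H$ produced from a real $M$ is Hermitian, which gives $T(\mathcal R)\subseteq\mathcal C$ directly, and the rest of your argument is untouched.
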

\begin{proof}
  We first consider $T$ to be a map from $\mathcal R$ to $4\times 8$-$\mathbb{R}$-matrices (map the complex numbers to $\mathbb{R}^{2}$).
  Moreover, the space of all Hermitian matrices can be considered a
  $16$-dimensional subspace of $\mathbb{R}^{4\times 8} $.
  The orthogonality and non-singularity follows  as the eigenvalues of the $T$ are $1, -1$ by Lemma \ref{sec:definition}.
\end{proof}
\vspace{0.1cm}

Now the next result follows by Theorem \ref{sec:definition-3}.

\begin{corollary}\label{sec:definition-4}
  The map $T$ is an isometry on $\mathbb{C}^{4\times 4} $ (with the Hermitian norm)
  and an isometry between  $\mathbb{R}^{4\times 4} $ and $\mathcal C$  (with the Euclidean norm)
  which maps the set of all Mueller matrices  onto the set of all semidefinite matrices. 
\end{corollary}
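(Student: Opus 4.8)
The plan is to assemble the corollary directly from the two preceding lemmas together with Theorem \ref{sec:definition-3}, since the isometry claims are merely the metric reformulation of the inner-product-preserving properties already established. The first step is to recall the elementary fact that a linear map which preserves an inner product automatically preserves the induced norm, and hence the induced distance $d(x,y)=\|x-y\|$; so every unitary, respectively orthogonal, linear map is an isometry. This observation reduces the whole statement to reading off the relevant structure from the lemmas.

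For the complex assertion I would invoke Lemma \ref{sec:definition}: since $T$ is unitary on $\mathbb{C}^{4\times 4}$ we have $\langle Tx,Ty\rangle=\langle x,y\rangle$, whence $\|Tx-Ty\|=\|x-y\|$ for all $x,y$, which is exactly the statement that $T$ is an isometry for the Hermitian norm. For the real assertion I would invoke Lemma \ref{sec:definition-1}, which says that $T\restriction\mathcal R$ is orthogonal from $\mathcal R$ to $\mathcal C$; under the canonical identification of $\mathbb{R}^{4\times 4}$ with $\mathbb{R}^{16}$ the trace scalar product coincides with the Euclidean inner product, so orthogonality again yields the isometry claim. The only point that needs a word is surjectivity onto $\mathcal C$: as $T\restriction\mathcal R$ is non-singular it is injective, and since $\mathcal R$ and $\mathcal C$ are both $16$-dimensional real vector spaces, a dimension count forces it to be a bijection onto $\mathcal C$.

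Finally, for the image of the Mueller matrices I would appeal to both directions of the biconditional in Theorem \ref{sec:definition-3}: a real matrix $M$ is a Mueller matrix if and only if $T(M)$ is a positive semidefinite Hermitian matrix. Combined with the bijectivity of $T\restriction\mathcal R$ onto $\mathcal C$, this shows that $T$ carries the set of all Mueller matrices bijectively onto the set of all positive semidefinite Hermitian matrices, which completes the claim.

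I expect no genuine obstacle here, since the substantive work was already carried out in Lemmas \ref{sec:definition} and \ref{sec:definition-1}; the corollary merely repackages \emph{inner-product preservation} as \emph{isometry} and the eigenvalue criterion as a map \emph{onto} the cone. If anything deserves care, it is being explicit that the two statements phrased with \emph{onto} are genuine surjections rather than mere inclusions, which the dimension count and the two-directional statement of Theorem \ref{sec:definition-3} supply.
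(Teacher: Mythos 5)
Your proposal is correct and follows essentially the same route as the paper, which states the corollary without a written proof precisely because it is the immediate combination of Lemma \ref{sec:definition}, Lemma \ref{sec:definition-1}, and both directions of Theorem \ref{sec:definition-3}; your write-up simply makes those steps explicit. The two details you spell out --- that inner-product preservation yields a metric isometry, and that surjectivity onto $\mathcal C$ and onto the semidefinite cone follows from non-singularity plus a dimension count and the biconditional --- are exactly the intended content.
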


\section{Optimal filtering revisited}

Now we are able to translate the following problem into a question about Hermitian matrices: 
Given a real $4\times 4$-matrix (a Mueller matrix one got from a measurement).
Then we ask what the nearest (in terms of the euclidean distance) physically feasible  Mueller matrix is.
The same holds for the question, which asks for the nearest non-depolarising matrix to a given measurement.
Which now by Lemma \ref{sec:definition}, Lemma \ref{sec:definition-1} and Corollary \ref{sec:definition-4} can be translated to the question;
What is the nearest positive semidefinite matrix (with rank $1$ in the non-depolarising case)  to a given Hermitian matrix.
The answer to the first question by implicitly answering the second was already given in \cite{Gil:16},
but we can now rely on well-established mathematical theory to show this.
We will further derive a more general result and apply it to a further case.

\vspace{0.1cm}

\textit{Notation;} By $[a]$ we denote the diagonal matrix with entries $a_{n}\le\ldots \le a_{1} $ and
by $\mathcal U_{n} $ the set of all unitary $n\times n$-matrices.

\vspace{0.1cm}

The following is true in fact for any unitarily-invariant matrix norm $\norm{*}$
such as the Hermitian norm. It can be considered a 
Hermitian version of Theorem 4.5 of \cite{doi:10.1080/03081088708817747}.

\begin{theorem}\label{sec:definition-2}
  Let $c$ be some fixed real number.
  Let $Y$ be some non-empty closed subset of $$\{(x_{1},\ldots ,x_{n}) \in \mathbb{R}^{n}:x_{1}\ge \ldots \ge x_{n}\ge c \}.$$
  Further, let $S_{Y} $ be the set
  $$ \{V^{\dagger}[d]V: V\in \mathcal U_{n}, d\in Y \}.$$
  Given some Hermitian $A=U^{\dagger}[a]U$ with $a_{1}\ge \ldots \ge a_{n}\ge c$ and $U\in\mathcal U_{n} $,
  we then have that for some $b\in Y $ the following holds
  $$ \norm{A-U^{\dagger}[b]U}\le \norm{A-X} \quad \text{for all}  \quad X\in S_{Y} .$$
\end{theorem}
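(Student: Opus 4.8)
The plan is to reduce the optimisation of $\norm{A-X}$ over the matrix set $S_Y$ to a finite-dimensional optimisation over the eigenvalue vectors in $Y$, by showing that replacing the eigenbasis of a competitor $X$ with that of $A$ can only decrease the distance. The single nontrivial ingredient is the following eigenvalue estimate, which is the Hermitian analogue of the Hoffman--Wielandt inequality and is valid for \emph{every} unitarily-invariant norm: if $A$ and $X$ are Hermitian with eigenvalues $a_1\ge\ldots\ge a_n$ and $x_1\ge\ldots\ge x_n$, then $\norm{[a]-[x]}\le\norm{A-X}$, where $[a]$ and $[x]$ denote the diagonal matrices of the sorted eigenvalues. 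I would obtain this from the Lidskii--Wielandt theorem, which says that the vector $v=(a_i-x_i)_i$ of differences of sorted eigenvalues is majorised by the spectrum $\lambda(A-X)$. Since a unitarily-invariant norm is a symmetric gauge function $\Phi$ of the singular values, and symmetric gauge functions are symmetric and convex, hence Schur-convex, majorisation transfers to the norm: from $v\prec\lambda(A-X)$ one gets $\Phi(v)\le\Phi(\lambda(A-X))$, and because $\Phi$ depends only on absolute values this right-hand side equals $\norm{A-X}$ while the left-hand side is exactly $\norm{[a]-[x]}$. This is the step I expect to be the main obstacle, as it is the only place where the full strength of the unitarily-invariant-norm machinery is required; everything else is bookkeeping.

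With this lemma in hand the rest is routine. Every $d\in Y$ is already arranged in decreasing order, so for any $X=V^{\dagger}[d]V\in S_Y$ the sorted spectrum of $X$ is precisely $d$, and the lemma yields $\norm{A-X}\ge\norm{[a]-[d]}$. On the other hand, the competitor that reuses the eigenbasis of $A$, namely $X_0=U^{\dagger}[d]U$, satisfies $\norm{A-X_0}=\norm{U^{\dagger}([a]-[d])U}=\norm{[a]-[d]}$ by unitary invariance, so the lower bound is attained exactly when the eigenvectors are aligned. Hence $\inf_{X\in S_Y}\norm{A-X}=\inf_{d\in Y}\norm{[a]-[d]}$, and this infimum is realised inside the family $\{U^{\dagger}[d]U: d\in Y\}$.

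It remains to produce the minimiser $b$. The map $d\mapsto\norm{[a]-[d]}$ is continuous and coercive, since $\norm{[a]-[d]}\to\infty$ as $\norm{d}\to\infty$ (the norm of a diagonal matrix controls its entries). Intersecting $Y$ with a sufficiently large closed ball reduces the problem to a compact set, and as $Y$ is nonempty and closed the minimum is attained at some $b\in Y$. Putting $X=U^{\dagger}[b]U$ then gives $\norm{A-U^{\dagger}[b]U}=\norm{[a]-[b]}=\min_{d\in Y}\norm{[a]-[d]}\le\norm{[a]-[d]}\le\norm{A-X}$ for every $X\in S_Y$, which is the claim. I note that the lower bound $c$ plays no role in the argument beyond fixing the ambient set containing $Y$; what makes the reduction clean is that $Y$ lies in the decreasingly-ordered cone, so the optimal matching of the eigenvalues of $A$ and $X$ is simply the identity permutation rather than a nontrivial one.
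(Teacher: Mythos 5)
Your proof is correct and takes essentially the same route as the paper: the paper likewise reduces the problem to the nearest eigenvalue vector in $Y$ realised in the aligned eigenbasis $U^{\dagger}[b]U$ (delegating the attainment/alignment bookkeeping to a modification of Theorem 4.5 of its first reference) and then invokes precisely your key inequality $\norm{[a]-[x]}\le\norm{A-X}$ as a cited result, which you instead derive from the Lidskii--Wielandt majorisation together with the Schur-convexity of symmetric gauge functions. The only difference is that you give self-contained proofs of the two ingredients the paper outsources to the literature.
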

\begin{proof}
  The proof of Theorem 4.5 in
  \cite{doi:10.1080/03081088708817747} can be easily modified.
  Conclude in the same way that there exists some $b\in Y$ such that $\norm{A-U^{\dagger}[b]U}\le \norm{[a-x]}$ for any $x\in Y $.
  Since our matrices are Hermitian, we know then by Theorem 2 of  \cite{10.2307/2032661} that $\norm{[a]-[x]}\le \norm{A-X}$ for any $X\in S_{Y}$ which has $x$ as its eigenvalues.  
\end{proof}

\vspace{0.1cm}

This Theorem together with Corollary \ref{sec:definition-4} now lets us translate any nearness problems of Mueller matrices into a problem of nearness of the eigenvalues.

\begin{corollary}\label{sec:optim-filt-revis}
  Let $c$ be some fixed real number.
  Let $Y$ be a non-empty closed set in  $$\{(x_{1},\ldots ,x_{4}) \in \mathbb{R}^{n}:x_{1}\ge \ldots \ge x_{4}\ge c \}$$
  and let $M$ be a real $4\times 4$-Matrix such that the eigendecomposition of $T(M)$ is $U^{\dagger}[a]U$.
  Let $$ \mathcal{M}_{Y}=\{T^{-1}(V^{\dagger}[d]V): V\in \mathcal U_{n}, d\in Y \}.$$
  Then the nearest Mueller matrix in  $\mathcal{M}_{Y}$ in terms of the euclidean norm to $M$ is the matrix $T^{-1}(U^{\dagger}[d]U)$ 
  where $d\in Y$ is chosen such that $\norm{d-a}$ is minimal among all elements of $Y$. 
\end{corollary}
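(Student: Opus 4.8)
The plan is to treat this as a pure translation of Theorem \ref{sec:definition-2} across the isometry of Corollary \ref{sec:definition-4}, so that essentially no new computation is required. First I would record that, since $T$ carries $\mathcal R$ bijectively onto $\mathcal C$ and every $V^\dagger[d]V$ is Hermitian, the set $\mathcal{M}_{Y}$ is precisely the preimage $T^{-1}(S_Y)$ of the set $S_Y$ appearing in Theorem \ref{sec:definition-2} (specialised to $n=4$); in particular each element of $\mathcal{M}_{Y}$ is a genuine real matrix, so the problem is well posed.

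Next, because $T\restriction\mathcal R$ is an isometry onto $\mathcal C$ by Corollary \ref{sec:definition-4}, for every $N\in\mathcal{M}_{Y}$ we have $\norm{M-N}=\norm{T(M)-T(N)}$ with $T(N)\in S_Y$. Hence minimising $\norm{M-N}$ over $N\in\mathcal{M}_{Y}$ is literally the same optimisation as minimising $\norm{T(M)-X}$ over $X\in S_Y$. Writing $A=T(M)=U^\dagger[a]U$, Theorem \ref{sec:definition-2} supplies a minimiser of the form $U^\dagger[b]U$ with $b\in Y$ and the \emph{same} unitary $U$, and applying $T^{-1}$ returns $T^{-1}(U^\dagger[b]U)$ as the nearest element of $\mathcal{M}_{Y}$. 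It then only remains to identify $b$ explicitly.

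This identification is where the Euclidean specialisation does the work. By unitary invariance $\norm{A-U^\dagger[d]U}=\norm{U^\dagger([a]-[d])U}=\norm{[a]-[d]}$, and for the Frobenius norm $\norm{[a]-[d]}$ is just the Euclidean length $\norm{a-d}$ of the vector of diagonal differences (the common descending ordering of $a$ and of $d\in Y$ is what pairs the entries correctly). Thus the optimal diagonal solves $\min_{d\in Y}\norm{d-a}$, which is exactly the $d$ named in the statement, and closedness of $Y$ together with continuity and coercivity of $d\mapsto\norm{d-a}$ (minimise over $Y$ intersected with a large closed ball) guarantees that such a $d$ exists. This gives $b=d$ and the claimed formula $T^{-1}(U^\dagger[d]U)$.

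The only genuinely delicate point, which I expect to be the main obstacle, is the interface with Theorem \ref{sec:definition-2}: as stated that theorem assumes the eigenvalues $a_i$ of $A$ lie in $[c,\infty)$, whereas in the motivating case of projection onto the positive semidefinite cone ($c=0$) a measured $T(M)$ typically has negative eigenvalues. I would therefore verify — and make explicit — that the estimate $\norm{[a]-[d]}\le\norm{A-X}$ used in its proof (the Hoffman--Wielandt/Mirsky inequality, Theorem 2 of \cite{10.2307/2032661}) holds for an arbitrary Hermitian $A$, so the argument survives unchanged even when some $a_i<c$. This is precisely what makes the corollary usable for actual Mueller-matrix filtering.
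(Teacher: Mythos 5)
Your argument is correct and is essentially the paper's own route: the corollary is stated there without a separate proof precisely because it is the composition of the isometry of Corollary \ref{sec:definition-4} with Theorem \ref{sec:definition-2}, together with the unitary-invariance computation $\norm{A-U^{\dagger}[d]U}=\norm{[a]-[d]}=\norm{a-d}$ that identifies the optimal diagonal. Your closing concern about the hypothesis $a_{1}\ge\ldots\ge a_{4}\ge c$ is well spotted (Theorem \ref{sec:definition-2} assumes it while the corollary's statement does not), and your Mirsky-based fix is valid, but note there is a cheaper repair the paper implicitly uses: since $Y$ closed in $\{(x_{1},\ldots,x_{4}):x_{1}\ge\ldots\ge x_{4}\ge c\}$ is in fact closed in $\mathbb{R}^{4}$, one may simply replace $c$ by $\min(c,a_{4})$ before invoking the theorem, exactly as the paper does in its Cloude-filter application where $c$ is taken to be the smallest eigenvalue of $T(M)$.
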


Hence, we can now easily conclude that the filtering proposed by Cloude \cite{10.1117/12.962889} is optimal.
For this let $M$ be the measured Matrix and let $T(M)=U^{\dagger}[a]U$ has a minimal eigenvalue of $c$.
We further assume that $c<0$ as otherwise we do not have to apply any filter.
Let $[a']$ be the tuple where we set all negative eigenvalues of $[a]$ to $0$.g
As the set $$\{(x_{1},\ldots ,x_{4}) \in \mathbb{R}^{4}:x_{1}\ge \ldots \ge x_{4}\ge 0 \}$$
is closed in $$\{(x_{1},\ldots ,x_{4}) \in \mathbb{R}^{4}:x_{1}\ge \ldots \ge x_{4}\ge c \} ,$$
Corollary \ref{sec:optim-filt-revis} lets us conclude that $T(U^{\dagger}[a']U)$
is the nearest Mueller matrix estimate of $M$.
In similar fashion we can conclude that setting all but the biggest eigenvalue of $T(M)$ to $0$
will give us the best estimate for non-depolarising Mueller matrices.

Let us now consider all Mueller matrices $M$ which can be decomposed as a sum of a non-depolarising part $P$ and a perfectly depolarising matrix $D$,
i.e.~$D$'s only non-zero element is the upper left entry.
Now if we map $D$ via $T$ onto the Hermitian matrices we will see that $T(D)$ is a diagonal matrix $[(d\ldots d)]$.
We continue by stating some simplified version of Weyl's inequality.
\begin{theorem}
  Let $A,B, C$ be Hermitian matrices.
  If $A + B =C$ and $a_{n}\le \ldots \le a_{1} $, $b_{n}\le \ldots \le b_{1} $ and $c_{n}\le \ldots \le c_{1} $ be their eigenvalues,
  then we have that $a_{i}+b_{n}\le c_{i}\le a_{i}+b_{1} $ for all $1\le i\le n$.
\end{theorem}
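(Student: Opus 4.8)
The plan is to deduce both inequalities from the Courant--Fischer min-max characterisation of the eigenvalues of a Hermitian matrix, which is the standard tool for perturbation bounds of this kind. Recall that for a Hermitian $A$ with eigenvalues $a_{1}\ge \cdots \ge a_{n}$ one has
$$ a_{i} = \max_{\dim V = i}\; \min_{\substack{x\in V\\ \norm{x}=1}} \langle Ax,x\rangle, $$
where $V$ ranges over the subspaces of $\mathbb{C}^{n}$ of the indicated dimension. I would take this identity as known.

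For the right-hand inequality I would apply this formula to $C=A+B$. Fixing a subspace $V$ with $\dim V = i$ and a unit vector $x\in V$, linearity of the inner product gives $\langle Cx,x\rangle = \langle Ax,x\rangle + \langle Bx,x\rangle$, and since $b_{1}$ is the largest eigenvalue of $B$ we have $\langle Bx,x\rangle \le b_{1}$ for every unit vector. Hence the inner minimum over $V$ satisfies
$$ \min_{\substack{x\in V\\ \norm{x}=1}} \langle Cx,x\rangle \;\le\; b_{1} + \min_{\substack{x\in V\\ \norm{x}=1}} \langle Ax,x\rangle, $$
and taking the maximum over all $i$-dimensional $V$ yields $c_{i}\le a_{i}+b_{1}$.

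For the lower bound I would not repeat the argument but instead exploit the symmetry $A = C + (-B)$. The matrix $-B$ is Hermitian with largest eigenvalue $-b_{n}$, so the bound just established (applied with $C$, $-B$, $A$ playing the roles of $A$, $B$, $C$) gives $a_{i}\le c_{i} + (-b_{n})$, which rearranges to $a_{i}+b_{n}\le c_{i}$. Together the two estimates give the claim for every $i$.

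There is no genuine obstacle here, as the result is classical; the only points demanding care are bookkeeping ones. One must match the decreasing convention $a_{n}\le \cdots \le a_{1}$ of the statement to whichever form of Courant--Fischer one invokes, and choose the max-min (rather than min-max) variant so that the bound $\langle Bx,x\rangle \le b_{1}$ points in the useful direction. Should one wish to avoid quoting Courant--Fischer, the identity itself follows from the spectral theorem by a dimension-counting intersection argument, but invoking it directly is cleaner.
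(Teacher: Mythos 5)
Your proof is correct. Note, however, that the paper offers no proof of this statement at all: it is quoted as ``some simplified version of Weyl's inequality,'' a classical fact, and the text immediately proceeds to apply it. So there is no in-paper argument to compare against; what you have supplied is the standard textbook proof. Your use of the Courant--Fischer max--min characterisation is sound: for a fixed $i$-dimensional $V$ and unit $x\in V$, $\langle Cx,x\rangle=\langle Ax,x\rangle+\langle Bx,x\rangle\le\langle Ax,x\rangle+b_{1}$ passes correctly through the inner minimum and outer maximum to give $c_{i}\le a_{i}+b_{1}$, and your reduction of the lower bound via $A=C+(-B)$ is clean --- the largest eigenvalue of $-B$ is indeed $-b_{n}$ under the decreasing convention, so $a_{i}\le c_{i}-b_{n}$ rearranges as claimed. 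You also correctly flag the only delicate bookkeeping point, namely matching the ordering convention to the max--min (rather than min--max) form. One small remark: the same subspace argument, applied with an intersection of eigenspace-generated subspaces rather than the crude bound $\langle Bx,x\rangle\le b_{1}$, yields the full Weyl inequalities $c_{i+j-1}\le a_{i}+b_{j}$, of which the stated theorem is the special case $j=1$ together with its mirror image; since the paper only needs the $j=1$ case (indeed, as its own footnote observes, even Weyl is overkill there because $B$ is a multiple of the identity), your simpler argument is entirely adequate.
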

Now in our case, if we take $T(P)=A$, $T(D)=B$ and $T(M)=C$ and let $p_{i} $ be the eigenvalues of $T(P)$
and $x_{i} $ be the eigenvalues of $T(M)$,
we get that $p_{i}+d=x_{i} $ for all $i$ where $1\le i\le 4$ holds.
\footnote{Of course, it is not logically necessary to apply Weyl's inequality here
  as $B$ is diagonal under any basis. But we do for the sake of introducing longstanding mathematical results.}
Hence the matrix $T (M)$ has eigenvalues of the form $x_{1}\ge x_{2} =x_{3} =x_{4}\ge 0$.
On the other hand, take any Matrix $H$ which has eigenvalues of this form.
Then subtracting $[x_{4},\ldots,x_{4}]$ will give us by applying Weyl's inequality again,
that $C = H -[x_{4},\ldots,x_{4}]$ has eigenvalues $x_{1}-x_{4}, 0, 0 ,0$ and therefore $T^{-1} (C)$ is non-depolarising.
Hence we know that $T^{-1} (S_{E})$ with $E=\{(x_{1},x_{2},x_{3},x_{4}) \in \mathbb{R}^{4}:x_{1}\ge x_{2}=x_{3}=x_{4}\ge 0\} $ 
is the set of all Mueller matrices which can be decomposed in a non-depolarising part
and perfectly depolarising part.\footnote{This was already mentioned in \cite{ossikovski2008depolarizing}} %
Now asking what the best estimate for a measured matrix, which has this type of composition, can be answered by
applying Corollary \ref{sec:optim-filt-revis} with ${E} $ (which is closed).
Now if $a_{1}\ge a_{2} \ge a_{3} \ge a_{4} $ are the eigenvalues of $T (M)$ then
$b= (a_{1},c,c,c)$ with $c =\frac{1}{3}\sum_{i= 2}^{4}a_{i} $ is the best estimate in $E$.
And hence, we have that for a measurement $M$ the best estimate in $T^{-1} (S_{E})$ is $T^{-1} (U^{\dagger} [b]U) $.
This also shows us that Equation (17) of \cite{le1996optical} is in fact the best \textit{a priori} estimate  for the  perfectly depolarising part,
contrary to what was stated in that paper.

\section{Geometry of the semidefinite cone}

The reader may wonder why we could so easily compute the nearest Mueller matrix to a given real matrix
or respectively solve the corresponding problem the nearest semidefinite matrix to Hermitian.
This ultimately has to do with the nature of the object consisting of all complex semidefinite matrices.
It turns out that this is a deeply studied object which is known under the name \textit{complex semidefinite cone} 
or more generally symmetric cones and is used among other things for \textit{complex semidefinite programming.}
Many very nice properties such as convexity are known about it.
In fact, it is a cone. So it is closed under positive linear combinations,
i.e. $\alpha H_{1} +\beta H_{2} $ is also positive definite
with $H_{1},H_{2} $ positive definite and $\alpha,\beta $ positive numbers.
Of course, this implies that the set of all Mueller matrices is a cone by linearity of $T $.
It is also known what the interior (all positive definite matrices) and  the boundary (all singular positive semidefinite matrices) is.
Moreover, there is a Riemannian metric tensor on its interior (see \cite{hill1987cone} and Chapter 6 of \cite{book:71688}).%

Now the practitioner can use this knowledge and grab ready available tools and mathematical theory.
For example, take a subset of Mueller matrices $S$ and a function $f:S\to \mathbb{R} $ one wants to optimise.
We have just seen such functions namely the distance of the Mueller matrices (or certain subsets of them) to a given measurement $M$.
As done before, we can translate the problem by optimising the map $f\circ T^{-1}$ from $T(S)$ to $\mathbb{R} $ instead.
This can be either done by finding suitable theory about the semidefinite cone such as Theorem \ref{sec:definition-2}
and then solve the problem directly.
Or a more general approach would be to use available tools for solving optimisation problems.
As a start one would transfer the complex optimisation problem into a real one (with tools as YALMIP \cite{Lofberg2004}).
Although voices have been raced to consider optimisation in the complex numbers directly \cite{gilbert2017plea}.
In any way, there are many available software tools for computing the optimum of a function on the complex or real semidefinite cone
such as Manopt \cite{manopt}, Pymanopt \cite{JMLR:v17:16-177} and SeDuMi \cite{sturm1999using}.

We highlight one approach of characterising the space of semidefinite matrices of some fixed rank
taken from \cite{6638382} and \cite{vandereycken2009embedded}
which is also described in the code of \cite{manopt} and \cite{JMLR:v17:16-177}.
Now if the rank is $1$ then this space is in correspondence via $T$ with the non-depolarising Mueller matrices.
The differential geometry of the non-depolarising Mueller matrices was already studied in \cite{article2010Devlaminck}.
We going to outline now the differential geometry of the Hermitian positive semidefinite cone. 

A semidefinite matrix $H$ from $\mathbb{C}^{4\times 4} $ of rank $k$ can be written as an outer product $YY^{\dagger} $
of a matrix $Y$ of $\mathbb{C}^{4\times k} $ of full rank.
On the other hand any such outer product  $Y Y^{\dagger} $ is positive semidefinite and of rank $k$.\footnote{The same factorisation was already used in \cite{sheppard2018factorization}, although they did not consider the subtleties of the rank and the oncoming uniqueness properties.}
As in \cite{6638382} we define an equivalence relation on $\mathbb{C}^{4\times k} $
by identifying $YU$ with $Y$ for all unitary matrices $U$ (as the outer product does not change, i.e. $YY^{\dagger} =YU(YU)^{\dagger}$).
We denote the manifold of all $\mathbb{C}^{4\times k} $ matrices of full rank as ${C_{4k}} $.
Now by the quotient manifold theorem the manifold ${C}_{4k}/{U}(k) $ is a Riemann quotient manifold, if ${U}(k) $
is the Lie group of all unitary matrices.

One can note a striking similarity to Cholesky decomposition, which is used in \cite{aiello2006maximum} and \cite{faisan2013estimation}.
In particular, in case of positive definite matrices the Cholesky decomposition is unique
and ${C}_{44} $ can be replaced with all triangular matrices with real diagonal entries.
In the case $k<4 $ one can find a unique decomposition after a twisting with permutation matrices \cite{higham1990analysis}
and hence one would end up with a finite-to-one map (bounded by 24, the number of $4\times 4 $-permutation matrices).
For all $k$ the metric of the manifold is given by the real-trace inner product,
if identifying the complex numbers with $\mathbb{R}^{2} $.
Moreover, when $k=1$ we can find a representative of the equivalence classes
by requiring that the first non-zero element of the tuple $c \in \mathbb{C}^{4} $ is a real number.
This lets us conclude that its dimension is $7$.

Further, if we identify the $\mathbb{C} $ with $\mathbb{R}^{2} $
then the quotient manifold theorem tells us also that the dimension of the Riemannian manifold
of all complex positive semi-definite matrices of rank $k$ has dimension $4\cdot k-k^{2} $.
Of course, all this analysis extends to the Mueller matrices by extending the mapping via $T^{-1} $.
This then implies that the manifold of Mueller matrices is a decomposition
of this quotient manifolds ${C}_{4k}/{U}(k) $ with $1\le k \le 4$
and the zero element.
Furthermore, in case where the Mueller matrices $M$ are assumed to be the sum of a non-depolarising matrix and an ideal depolariser
and hence the corresponding coherency matrices $T(M) $ are a sum of a rank-$1$ positive semidefinite matrix
and a diagonal matrix with positive entries,
it is not hard to see that this manifold is the product manifold of the positive real numbers $ \mathbb{R}_{+} $ 
and the manifold of all complex rank-$ 1 $ positive semidefinite matrices.

What also can be interfered from the above analysis is the following.
We set the above together to receive a map $F$ which is defined as follows
\begin{equation}
  \mathbb{R}^{2\times 4\times k}\to_{\mathbb{R}^{2}= \mathbb{C}} \mathbb{C}^{4\times k}\to_{YY^{\dagger}}\text{HPSD}\to_{T^{-1}} \mathcal{M}     
\end{equation}
where \textit{HPSD} is the space of all Hermitian positive semidefinite matrices and $\mathcal{M} $ the space of the Mueller matrices.
Now a short calculation gives us then that $F$ is a quadratic homogeneous polynomial and hence any $F(\lambda x)=\lambda^{2} F(x) $.
Moreover, we can see that $\norm{x}_{\text{Euclidean}}^{2}=F(x)_{11} $, where $F(x)_{11}$ is
the upper left element of the Mueller matrix.
This means that it is almost always enough to study the reduced case of Mueller matrices 
which have upper left element $1$.

Another question which now arises is that of the mean of two or more matrices.
In the euclidean space this of course just the standard Arithmetic mean.
But in manifolds the geodesic might look very different from a straight line
and hence the average of two matrices, i.e. the middle point on the geodesic between these two,
might be significantly different from the arithmetic mean.
This case of the geometric average of two Mueller matrices was already covered in \cite{devlaminck2010mueller}.
The generalisation of this concept namely the Riemannian barycenter of matrices $A_{1}\ldots A_{n} $,
i.e.~the matrix  which is the minimum of the function $\sum_{i=1}^{n}d(X,A_{i}) $ where $d$ is the distance measure on the manifold.
Again we can rely on a well studied area of means of semidefinite linear operators. 
Studying of the mean of two linear operators began through a study of connections of electrical networks \cite{anderson1969series}.
This was then followed by more axiomatic studies on general Hermitian operators \cite{kubo1980means}, \cite{pusz1975functional}.
Means between more than two matrices have been studied in \cite{ando2004geometric}. 
In \cite{bonnabel2009riemannian} means have been studied in case of real  semidefinite matrices of fixed rank.
An exposition of the geometric nature of means can be found in Chapter 6 of \cite{book:71688}.
All together this suggests that computing the mean of multiple Mueller matrices 
should be done using the Riemannian geometric mean. 
In practice this would be done by transferring them via $T$ to the semidefinite cone
and then using available implementation of the Riemannian mean such as the tool Yalmip \cite{Lofberg2004}. 

\section{Conclusion}
We have established a connection between the area of Mueller matrices 
and the areas of general matrix analysis, Riemannian geometry and optimisation.
All basically by interpreting existing results
and making the simple observation that the real ambient space of the Hermitian positive semidefinite matrices
and the Mueller matrices and the objects themself isometrically map onto each other. 
With this new knowledge, we showed 
how matrix analysis can be directly used
to prove an optimality result (see Corollary \ref{sec:optim-filt-revis})
for the filtering of measured Mueller matrices.

We further reviewed mathematical results about the complex semidefinite cone
and noted how this can be used with our previous results and how this suggests a new mean for Mueller matrices.
Of course, such connection have been partly discovered in the past or 
general results about semidefinite matrices have been reproved in the special case of Mueller matrices 
and $4\times 4 $-Hermitian semidefinite matrices.
But our connection makes this precise and provides a way to bring well-established mathematical theories and tools 
into the polarimetric world.
One can also speculate that the analysis which we have established here, 
might bring new insight to  quantum optics and quantum information 
as they share some mathematical objects \cite{aiello2006maximum}.

What is still missing in our analysis is to bring together this analysis 
with the study of the Lie group structure of invertible Mueller matrices. Or more generally the semigroup structure.
Of course, by our analysis of the geometry it is easy now to compute the tangent space at the identity and therefore the Lie algebra.
But this is nothing new, the study of the Lie group and Lie algebra was already done in  \cite{devlaminck2008definition}.
What is still missing is a study how the geometry of the additive structure of the Mueller matrix,
which corresponds to parallel optical elements,
and the geometry of the multiplicative structure, which corresponds to successive optical elements, interact.

\printbibliography

\end{document}